\newtheorem{theorem}{Theorem}
\newtheorem{corollary}{Corollary}[theorem]
\newtheorem{remark}{Remark}[theorem]
\def \be{\begin{equation}}
\def \ee{ \end{equation} }
\begin{document}

\definecolor{red}{rgb}{1,0,0}
\title{Conditions for monogamy of quantum correlations in multipartite systems}
\author{Asutosh Kumar}
\email{asukumar@hri.res.in}

\affiliation{Harish-Chandra Research Institute, Chhatnag Road, Jhunsi, Allahabad 211 019, India}

\begin{abstract}
Monogamy of quantum correlations is a vibrant area of research because of its potential applications in several areas in quantum information ranging from quantum cryptography to co-operative phenomena in many-body physics.
In this paper, 
we investigate conditions under which monogamy is preserved for functions of quantum correlation measures.
We prove that a monogamous measure remains monogamous on raising its power, and a non-monogamous measure remains non-monogamous on lowering its power.  
We also prove that monogamy of a convex quantum correlation measure for arbitrary multipartite pure quantum state leads to its monogamy for mixed states in the same Hilbert space. Monogamy of squared negativity for mixed states and that of entanglement of formation follow as corollaries of our results. 
\end{abstract}

\maketitle

\section{Introduction}

Quantum correlations \cite{horodecki09,modidiscord}, of both entanglement \cite{horodecki09} and information-theoretic \cite{modidiscord} paradigms, is an indispensable resource in quantum information theory \cite{nielsen}.
While entanglement measures capture the nonseparability of two or more subsystems, information-theoretic measures like quantum discord \cite {hv, oz} can detect nonclassical properties even in separable states.
It is desirable that a quantum correlation measure ${\cal Q}$ belonging to either of two above classes satisfies certain basic properties \cite{horodecki09,modidiscord, adesso-general} such as \emph{positivity}, ${\cal Q}(\rho_{AB}) \geq 0$, and \emph{monotonicity}, i.e., is non-increasing under a suitable set of local quantum operations and classical communications
[in particular, invariance under local unitaries $U_A \otimes V_B$, ${\cal Q}(\rho_{AB}) = {\cal Q}(U_A \otimes V_B \rho_{AB} U_A^{\dagger} \otimes V_B^{\dagger})$, as well as \emph{no-increase upon attaching a local pure ancilla}, ${\cal Q}(\rho_{AB}) \geq {\cal Q}(\rho_{AB} \otimes |0\rangle_C \langle 0|)$].
These properties are valid for several known measures of quantum correlations, including all entanglement
measures. In particular, positivity and invariance under local unitaries are standard requirements \cite{std-req}.

Quantum correlations, entanglement in particular, is crucial in quantum information processing and quantum computation \cite{nielsen}, in describing area laws \cite{area-law1, area-law2, area-law3, area-law4, area-law5, area-law6, area-law7, area-law8, area-law9, area-law10, area-law11, area-law12}, in quantum phase
transition and detecting other cooperative quantum phenomena in various interacting quantum many-body systems \cite{many-body1, many-body2, many-body3, many-body4}. 
Hence, quantum correlations form a fundamental aspect of modern physics and a key enabler in quantum communication and computation technologies.
Being a resource, quantification of quantum correlations is important.
Although a number of correlation measures for bipartite (qubit) systems have been studied extensively in last few decades, there has not been much investigation of multipartite correlations owing to difficulty in defining multipartite correlations.

The concept of monogamy \cite{monogamy, osborne} is a distinguishing feature of quantum correlations,
which sets it apart from classical correlations.
Monogamy of quantum correlations is an active area of research, and has found potential applications in quantum information theory like in quantum key distribution
\cite{QKD-terhal, secureQKD, cryptoreview}, in classifying quantum states \cite{dvc, giorgi, discord-hri}, in distinguishing orthonormal quantum bases \cite{asu-bell}, in black-hole physics \cite{mono-blackhole1, mono-blackhole2}, to study frustated spin systems \cite{frust-spin}, etc. 
Morever, it has proved to be a useful tool in exploring multipartite nonclassical correlations \cite{monogamy, osborne, mono-dis2}. 
Qualitatively, monogamy of quantum correlations places certain restrictions on distribution of quantum correlations 
of one fixed party with other parties of a multipartite system. In particular, \emph{if party A in a tripartite
system ABC is maximally quantum correlated with party B, then A cannot be correlated at all
to the third party C}. 
This is true for all quantum correlation measures, and is a departure from classical correlations which are not bound to such constraints. That is, classical correlations do not satisfy a monogamy constraint \cite{no-mono-classical1, no-mono-classical2, no-mono-classical3, no-mono-classical4, no-mono-classical5, no-mono-classical6, no-mono-classical7, no-mono-classical8, no-mono-classical9, no-mono-classical10}.
In other words, monogamy forbids free sharing of quantum correlations
among the constituents of a multipartite quantum system. 
This is a nonclassical property in the sense that such constraints are not observed even in the maximally classically-correlated systems like
\begin{equation}
\rho_{ABC}=\frac 12 \big(|000\rangle \langle 000| + |111\rangle \langle 111| \big).
\label{eq:mcs}
\end{equation}

However, two or more parties in a multipartite quantum state do not necessarily always share maximal quantum correlation, and are thus able to share some correlations with other parties, although in a restrictive manner. 
Thus, monogamy relations help in determining entanglement structure in the multipartite setting. 
Furthermore, it has been argued to be a consequence of the no-cloning theorem \cite{mono-no-cloning-cons1, mono-no-cloning-cons2, mono-no-cloning-cons3}.
Monogamy, like entanglement \cite{schrodinger}, appears to be the trait of multipartite entangled quantum systems. 
Interestingly, the notion of monogamy is not restricted only to quantum correlation measures, but has spawned its wing 
in other quantum properties such as Bell inequality \cite{bell-ineq1, bell-ineq2, bell-ineq3}, quantum steering \cite{qsteering}, 
and contextual inequalities \cite{context-ineq1, context-ineq2, context-ineq3}.
A quantum correlation measure that satisfies the ``monogamy inequality'' for all quantum states is termed ``monogamous''. However, we know that not all quantum correlation measures, even for three-qubit states, satisfy monogamy. Entanglement measures
such as concurrence \cite{concurrence1, concurrence2}, entanglement of formation \cite{eof}, negativity \cite {negativity}, etc., apart from information-theoretic measures such as quantum discord \cite{hv, oz} are known to be, in general, non-monogamous. 
In recent developments on monogamy, we have seen that exponent of a quantum correlation measure and multipartite quantum states play a remarkable role in characterization of monogamy \cite{salini, asu-multi}. 
A non-monogamous quantum correlation measure can become monogamous, for three or more parties, when its power is increased \cite{salini}. For instance, concurrence, entanglement of formation, negativity, quantum discord are non-monogamous for three-qubit states, but their squared versions are monogamous. 
In particular, it has been shown that \emph{monotonically increasing functions of any quantum correlation can make all multiparty states monogamous} with respect to that measure \cite{salini}. 
We note that \emph{the increasing function of the correlation measure under consideration satisfies all the necessary properties for being a quantum correlation measure including posititvity and monotonicity under local operations}, mentioned above. Furthermore, the function can be so chosen that it is reversible \cite{rev-func1, rev-func2}, such that the information about quantum correlation in the state under consideration, after applying the function on the quantum correlation remains intact.
The power of a correlation measure is an example of such a function. It is interesting to note that the function $f(x)=x^{\alpha}$ is \emph{concave} for $0 < \alpha \leq 1$ and \emph{convex} for $1 \leq \alpha \leq \infty$ on the interval $(0, \infty)$. 
The power function has an intrinsic geometric interpretation. The power defines the slope of the graph.
The higher power, the graph is nearer to the vertical axis.
It has been found that several measures of quantum correlations like squared concurrence \cite{monogamy, osborne}, squared negativity \cite{mono-neg2-fan, mono-neg2-vidal, mono-neg2-kim}, squared quantum discord \cite{mono-dis2}, global quantum discord \cite{mono-gqd1, mono-gqd2}, squared entanglement of formation \cite{mono-eof2-bai, mono-eof2-fanchini}, Bell inequality \cite{mono-BI1, mono-BI2, mono-BI3}, EPR steering \cite{mono-epr-steering1, mono-epr-steering2}, contextual inequalities \cite{mono-contextual1, mono-contextual2}, etc. exhibit monogamy property.
Thus, we observe that the convexity plays a key role in establishing monogamy of quantum correlations.
In another case, non-monogamous quantum correlation measures become monogamous, for moderately large number of parties \cite{asu-multi}.

The motivation behind this paper is three-fold. In this letter, we have asked
(i) under what conditions monogamy property of quantum correlations is preserved, (ii) does monogamy for arbitrary pure multipartite state lead to monogamy of mixed states, and (iii) are there more general and stronger monogamy relations different from the standard one in Eq. (\ref{eq:std-mono-ineq}). 
We prove that while a monogamous measure remains monogamous on raising its power, a non-monogamous measure remains non-monogamous on lowering its power. 
We also prove that monogamy of a convex quantum correlation measure for an arbitrary multipartite pure quantum state leads to its monogamy for the mixed state in the same Hilbert space. Monogamy of squared negativity for mixed states and that of entanglement of formation follow as direct corollaries. 
Authors of Ref. \cite{hier-eof2} have proposed following two conjectures regarding monogamy of squared entanglement of formation in multiparty systems: \emph{the squared entanglement of formation may be monogamous for multipartite (i) $2 \otimes d_2 \otimes \cdots \otimes d_n$, and (ii) arbitrary $d$-dimensional, quantum systems}. Our previous result partially answers these conjectures in the sense that it now only remains to prove the monogamy of the squared entanglement of formation for pure states in arbitrary dimensions. 
We have further given hierarchical monogamy relations, and a strong monogamy inequality 
\begin{align}
{\cal Q}^{\alpha}(\rho_{AB}) \geq \frac{1}{2^{n-1}-1}\sum_{X} {\cal Q}^{\alpha}\left(\rho_{AX}\right) \geq \sum_{j} {\cal Q}^{\alpha}\left(\rho_{AB_j}\right),
\end{align}
where $X=\{B_{i_1},\cdots,B_{i_k}\}$ is a nonempty proper subset of $B \equiv \{B_1,B_2,\cdots,B_n \}$, and $\alpha \geq 1$ is some positive real number.

This letter is organized as follows. In Section \ref{monogamy}, we succintly review the notion of monogamy of quantum correlations. While the main results of this letter are presented in Section \ref{results}, we give a summary in Section \ref{sec:summary}.

\section{Monogamy of Quantum Correlations}
\label{monogamy}
Consider that ${\cal Q}$ is a bipartite entanglement measure. If for a multipartite quantum system described by a state $\rho_{AB_1 B_2\cdots B_n} \equiv \rho_{AB}$, the following inequality
\begin{equation}
{\cal Q}(\rho_{A(B_1\cdots B_n)}) \geq \sum_{j=1}^n{\cal Q}(\rho_{AB_j}),
\label{eq:std-mono-ineq}
\end{equation}
holds, then the state $\rho_{AB}$ is said to be monogamous under the quantum correlation measure ${\cal Q}$ \cite{monogamy, osborne}. Otherwise, it is non-monogamous. Moreover, the deficit between the two sides is referred to as 
monogamy score \cite{monogamyscore}, and is given by
\begin{equation}
\delta{\cal Q}={\cal Q}(\rho_{AB}) - \sum_{j=1}^n{\cal Q}(\rho_{AB_j}).
\label{eq:mono-score}
\end{equation}
Monogamy score can be interpreted as residual entanglement of the bi-partition $A:rest$ of an \(n\)-party state that cannot be accounted for by the entanglement of two-qubit reduced density matrices separately.

It should be noted here that the monogamy inequality in Eq. (\ref{eq:std-mono-ineq}) is just one constraint on the distribution of quantum correlations. Suppose ${\cal Q}$ does not obey the monogamy relation in 
Eq. (\ref{eq:std-mono-ineq}), then is it non-monogamous? Can it be shared freely among the constituent parties? 
It may happen that it obeys the following constraint
\begin{equation}
\label{eq:mono-ineq}
\sum_{j=1}^n{\cal Q}(\rho_{AB_j}) \leq b (\neq n), 
\end{equation} 
and be still monogamous. Here we assume that ${\cal Q}$ is normalized, i.e., $0 \leq {\cal Q} \leq 1$. Numerical evidence of such a limitation was observed for entanglement of formation and concurrence in Ref. \cite{mono-eof2-fanchini} for three-qubit systems.
 
Can there be more general and stronger monogamy relations than in Eq. (\ref{eq:std-mono-ineq})? Considerable attemps have been made to address this question from different perspectives \cite{hier-eof2, adesso-gaussian1, adesso-gaussian2, adesso-general, adesso-sm-4qb} recently. 


\section{Results}
\label{results}
In this section, we prove that a monogamous measure remains monogamous on raising its power, a non-monogamous measure remains non-monogamous on lowering its power, and monogamy of a convex quantum correlation measure for arbitrary multipartite pure quantum states leads to its monogamy for the mixed states. We also examine tighter monogamy inequalities compared to the standard one in Eq. (\ref{eq:std-mono-ineq}), and hierarchical monogamy relations. Throughout our discussion we denote the multipartite quantum state $\rho_{AB_1B_2\cdots B_n}$ by $\rho_{AB}$, unless stated otherwise.

\begin{theorem}
\label{thm-raising}
{\bf (Monogamy preserved for raising of power)} For an arbitrary multipartite quantum state $\rho_{AB}$, if ${\cal Q}^r(\rho_{AB}) \geq \sum_j {\cal Q}^r(\rho_{AB_j})$ then ${\cal Q}^{\alpha}(\rho_{AB}) \geq \sum_j {\cal Q}^{\alpha}(\rho_{AB_j})$ for $\alpha \geq r \geq 1$.
\end{theorem}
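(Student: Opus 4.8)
The plan is to reduce the statement to a single elementary inequality about the power function. First I would introduce the shorthand $A := {\cal Q}^r(\rho_{AB})$ and $B_j := {\cal Q}^r(\rho_{AB_j})$, so that the hypothesis reads $A \geq \sum_j B_j$ with all quantities nonnegative. Writing $\beta := \alpha/r$, the assumption $\alpha \geq r \geq 1$ gives $\beta \geq 1$, and the desired conclusion ${\cal Q}^{\alpha}(\rho_{AB}) \geq \sum_j {\cal Q}^{\alpha}(\rho_{AB_j})$ becomes exactly $A^{\beta} \geq \sum_j B_j^{\beta}$. Thus the whole theorem collapses to the following claim: if $A \geq \sum_j B_j \geq 0$ and $\beta \geq 1$, then $A^{\beta} \geq \sum_j B_j^{\beta}$.

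To establish this claim I would chain two facts. Since $t \mapsto t^{\beta}$ is monotonically increasing on $[0,\infty)$ and $A \geq \sum_j B_j$, we have $A^{\beta} \geq \left(\sum_j B_j\right)^{\beta}$. It therefore suffices to prove the \emph{superadditivity} inequality $\left(\sum_j B_j\right)^{\beta} \geq \sum_j B_j^{\beta}$ for $\beta \geq 1$, which follows from convexity of the power function: setting $S := \sum_k B_k$ (the case $S=0$ being trivial), each ratio $B_j/S$ lies in $[0,1]$, so $\left(B_j/S\right)^{\beta} \leq B_j/S$ because $\beta \geq 1$; summing over $j$ gives $\sum_j B_j^{\beta}/S^{\beta} \leq 1$, i.e. $\sum_j B_j^{\beta} \leq S^{\beta}$. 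Combining the two steps yields $A^{\beta} \geq \sum_j B_j^{\beta}$, which is the claim.

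I do not expect a genuine obstacle here, since both ingredients are elementary; the only point requiring a little care is the reparametrization $\beta = \alpha/r$, which is what lets me treat the known exponent $r$ as a new baseline and reduce matters to raising an already-monogamous measure to a power $\beta \geq 1$. It is worth noting that this argument never invokes the normalization $0 \leq {\cal Q} \leq 1$: the superadditivity step only needs $B_j \geq 0$ and $\beta \geq 1$, so the conclusion holds for any nonnegative correlation measure. I would present the proof in exactly this order --- reparametrize, apply monotonicity, then superadditivity --- and keep the superadditivity inequality self-contained, since the same fact will be reusable for the hierarchical and strong monogamy relations advertised in the introduction.
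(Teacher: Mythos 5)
Your proof is correct, and it rests on the same key fact as the paper's proof --- the superadditivity $\left(\sum_j y_j\right)^{\beta} \geq \sum_j y_j^{\beta}$ for $y_j \geq 0$ and $\beta \geq 1$, which the paper quotes in the form $(\sum_i x_i^t)^{s/t} \geq \sum_i x_i^s$ --- but you deploy it more directly. The paper takes a detour: it singles out an index $k$ with $\sum_{j \neq k} {\cal Q}^r(\rho_{AB_j}) \geq {\cal Q}^r(\rho_{AB_k})$, factors that term out, applies the Bernoulli-type inequality $(1+x)^t \geq 1+x^t$, and only then invokes superadditivity on the remaining $n-1$ terms; you simply apply monotonicity of $t \mapsto t^{\beta}$ followed by superadditivity over all $n$ terms at once. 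Your route buys three things. First, it is self-contained: you prove the superadditivity step by the normalize-by-the-sum trick $\left(B_j/S\right)^{\beta} \leq B_j/S$ rather than quoting it. Second, it sidesteps edge cases that the paper's argument glosses over: the index $k$ need not exist when $n=1$, and the factorization divides by $\sum_{j \neq k} {\cal Q}^r(\rho_{AB_j})$, which can vanish; your proof handles $S=0$ explicitly and needs no case analysis otherwise. Third, as you note, your argument never uses the normalization $0 \leq {\cal Q} \leq 1$, whereas the paper states its auxiliary inequalities under the restriction $0 \leq x, x_i \leq 1$ (a restriction that is in fact unnecessary for those inequalities as well). Since the paper's split-off-one-term step adds nothing --- once superadditivity is available for $n-1$ terms it is available for $n$ --- your streamlined version is, if anything, the cleaner proof of the same theorem.
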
 

\begin{proof}
We have the inequalities $(1+x)^t  \geq 1+x^t$ and 
$(\sum_i x_i^t)^{\frac{s}{t}} \geq \sum_i x_i^{s}$ where $0 \leq x,x_i \leq 1$ and $s \geq t \geq 1$. Now there exists $1 \leq k \leq n$ such that $\sum_{j \neq k} {\cal Q}^r(\rho_{AB_j}) \geq {\cal Q}^r(\rho_{AB_k})$. Now ${\cal Q}^r(\rho_{AB}) \geq \sum_j {\cal Q}^r(\rho_{AB_j})$ implies 
\begin{align}
{\cal Q}^{\alpha}(\rho_{AB})& \geq \left(\sum_j {\cal Q}^r(\rho_{AB_j})\right)^{\frac{\alpha}{r}} \nonumber \\
& = \left(\sum_{j \neq k} {\cal Q}^r(\rho_{AB_j})\right)^{\frac{\alpha}{r}}\left(1+\frac{{\cal Q}^r(\rho_{AB_k})}{\sum_{j \neq k} {\cal Q}^r(\rho_{AB_j})}\right)^{\frac{\alpha}{r}} \nonumber \\
& \geq \left(\sum_{j \neq k} {\cal Q}^r(\rho_{AB_j})\right)^{\frac{\alpha}{r}}\left(1+\left(\frac{{\cal Q}^r(\rho_{AB_k})}{\sum_{j \neq k} {\cal Q}^r(\rho_{AB_j})}\right)^{\frac{\alpha}{r}}\right) \nonumber \\
& = \left(\sum_{j \neq k} {\cal Q}^r(\rho_{AB_j})\right)^{\frac{\alpha}{r}}+{\cal Q}^{\alpha}(\rho_{AB_k}) \nonumber \\
& \geq \left(\sum_{j \neq k} {\cal Q}^{\alpha}(\rho_{AB_j})\right)+{\cal Q}^{\alpha}(\rho_{AB_k}) \nonumber \\
& = \sum_{j} {\cal Q}^{\alpha}(\rho_{AB_j}),
\end{align}
thus proving the theorem.
\end{proof}
The first and second inequalities respectively follow from the fact that $(1+x)^t  \geq 1+x^t$ and 
$(\sum_i x_i^t)^{\frac{s}{t}} \geq \sum_i x_i^{s}$ where $0 \leq x,x_i \leq 1$ and $s \geq t \geq 1$. 
This theorem can be viewed as an extension of the key result in Ref. \cite{salini} that a non-monogamous quantum correlation measure will become monogamous for some value when its power is raised.

\begin{theorem}
\label{thm-lowering}
{\bf (Non-monogamy preserved for lowering of power)} For an arbitrary multipartite quantum state $\rho_{AB}$, if ${\cal Q}^r(\rho_{AB}) \leq \sum_j {\cal Q}^r(\rho_{AB_j})$ then ${\cal Q}^{\alpha}(\rho_{AB}) \leq \sum_j {\cal Q}^{\alpha}(\rho_{AB_j})$ for $\alpha \leq r$.
\end{theorem}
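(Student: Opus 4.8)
The plan is to run the argument of Theorem~\ref{thm-raising} ``in reverse'': lowering the exponent flips the direction of each inequality used there, and the resulting chain is in fact shorter. Write $Q={\cal Q}(\rho_{AB})$ and $q_j={\cal Q}(\rho_{AB_j})$, all in $[0,1]$, so the hypothesis is $Q^r\le\sum_j q_j^r$ and the target is $Q^{\alpha}\le\sum_j q_j^{\alpha}$ for $0<\alpha\le r$. I would establish this by a direct two-step estimate rather than by reproducing the index-splitting of the previous proof.

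First I would raise the hypothesis to the power $\alpha/r$. Since $0<\alpha/r\le 1$, the map $t\mapsto t^{\alpha/r}$ is monotonically increasing on $[0,\infty)$, so $Q^{\alpha}=(Q^{r})^{\alpha/r}\le\left(\sum_j q_j^{r}\right)^{\alpha/r}$. Second, I would invoke the reverse of the norm inequality used in Theorem~\ref{thm-raising}, namely $\left(\sum_i x_i^{t}\right)^{s/t}\le\sum_i x_i^{s}$ for nonnegative $x_i$ and $0<s\le t$; this is just the monotonicity of the $\ell^{p}$-norm in $p$ (a larger exponent gives a smaller norm). Taking $t=r$, $s=\alpha$ and $x_i=q_i$ gives $\left(\sum_j q_j^{r}\right)^{\alpha/r}\le\sum_j q_j^{\alpha}$, and chaining the two estimates yields the claim.

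I expect the difficulty here to be bookkeeping rather than a genuine obstacle. The main point is to confirm that every ingredient of Theorem~\ref{thm-raising} reverses correctly: the superadditive bounds $(1+x)^{t}\ge 1+x^{t}$ and $\left(\sum_i x_i^{t}\right)^{s/t}\ge\sum_i x_i^{s}$ (for $t\ge 1$, $s\ge t$) are replaced by their concave counterparts $(1+x)^{t}\le 1+x^{t}$ and $\left(\sum_i x_i^{t}\right)^{s/t}\le\sum_i x_i^{s}$, which hold precisely when $0\le t\le 1$ and $0<s\le t$ respectively. A pleasant feature of the lowering direction is that the index-splitting step (choosing $k$ with $\sum_{j\ne k} q_j^{r}\ge q_k^{r}$) is unnecessary, and the $\ell^{p}$ step needs no bound $q_j\le 1$. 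Finally, I would note that merely taking the contrapositive of Theorem~\ref{thm-raising} (after swapping $r$ and $\alpha$) delivers only the \emph{strict} statement and only for $\alpha\ge 1$; the direct proof above is preferable because it gives the non-strict inequality over the full range $0<\alpha\le r$.
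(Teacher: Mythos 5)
Your proof is correct and takes essentially the same route as the paper: both first raise the hypothesis to the power $\alpha/r$, then apply subadditivity of $u\mapsto u^{\alpha/r}$ for $\alpha/r\le 1$ --- which you package as monotonicity of $\ell^p$-norms, $\left(\sum_i x_i^t\right)^{s/t}\le\sum_i x_i^s$ for $s\le t$, and which the paper obtains by repeated use of $(1+x)^t\le 1+x^t$ (the two are the same lemma). Your explicit restriction to $0<\alpha\le r$ is also implicitly what the paper's argument actually covers, since raising to a negative power $\alpha/r$ would flip the first inequality.
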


\begin{proof}
The inequality ${\cal Q}^r(\rho_{AB}) \leq \sum_j {\cal Q}^r(\rho_{AB_j})$ implies ${\cal Q}^{\alpha}(\rho_{AB}) \leq \left(\sum_j {\cal Q}^r(\rho_{AB_j})\right)^{\frac{\alpha}{r}}$. Now above theorem can be proved by using the inequality $(1+x)^t  \leq 1+x^t$, for $x > 0$ and $t \leq 1$, repeatedly.
\end{proof}

\begin{remark}
Theorems \ref{thm-raising} and \ref{thm-lowering} ensure that varying the exponent preserves monogamy (non-monogamy) relations of monogamous (non-monogamous) correlation measures.
Recently, it was shown in Ref. \cite{ent-mono-qubit} that, for multiqubit systems, the $r^{th}$-power of concurrence is monogamous for $r \geq 2$ while non-monogamous for $r \leq 0$, and the $r^{th}$-power of entanglement of formation (EoF) is monogamous for $r \geq \sqrt{2}$. These observations are consistent with above theorems. 
Similarly, negativity, quantum discord for three-qubit pure states, contextual inequalities, etc., will remain monogamous for $r \geq 2$. Also, quantum work-deficit, for all three-qubit pure states, will remain monogamous for the fifth power and higher \cite{salini}.
\end{remark}

\begin{remark}
Note that, at first sight, it seems that Theorems \ref{thm-raising} and \ref{thm-lowering} are rather about properties of abstract functions that can describe not only nonclassical correlations but any other property. 
We wish to note however that the theorems are not true for an arbitrarily chosen physical property.
For instance, for the mixture $\rho_{ABC}=\frac 12 \big(|000\rangle \langle 000| + |111\rangle \langle 111| \big)$ in Eq. (\ref{eq:mcs}), the classical correlation, irrespective of the definition used, in all the bi-partitions A:(BC), A:B, and A:C is unity, after a suitable normalization. In this case, raising the power to any value, however large, of the classical correlation, won’t make it monogamous. This example illustrates the fact why raising or lowering of powers to nonclassical correlations is important and necessary. Thus, the above two theorems should be seen mainly in the context of nonclassical correlations.
\end{remark}

\begin{figure}[htb]
\includegraphics[width=3.5cm]{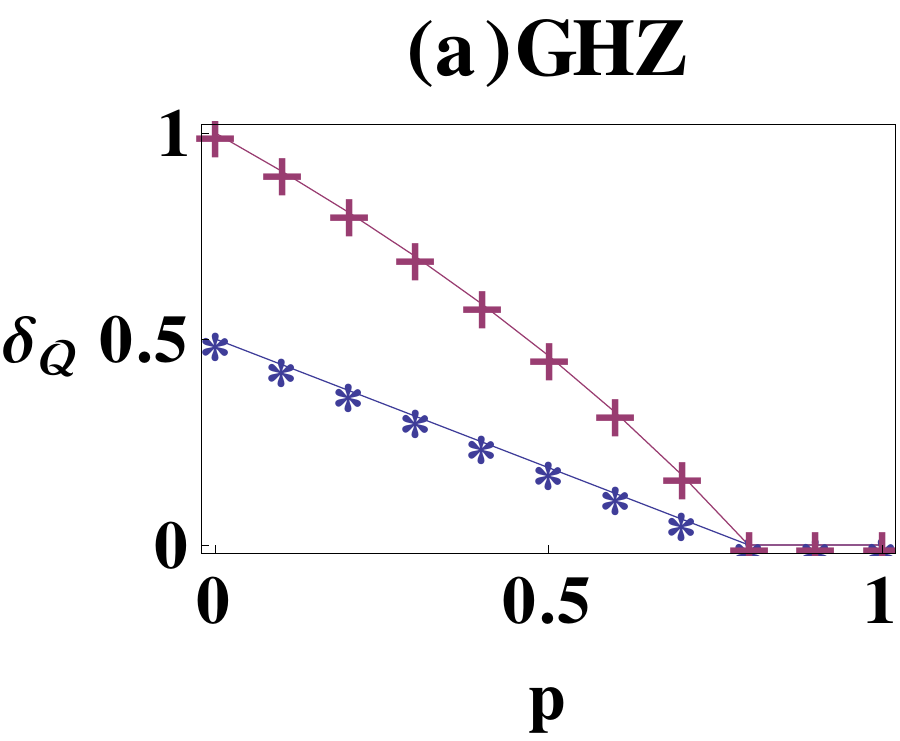} \hspace{0.5cm}
\includegraphics[width=3.5cm]{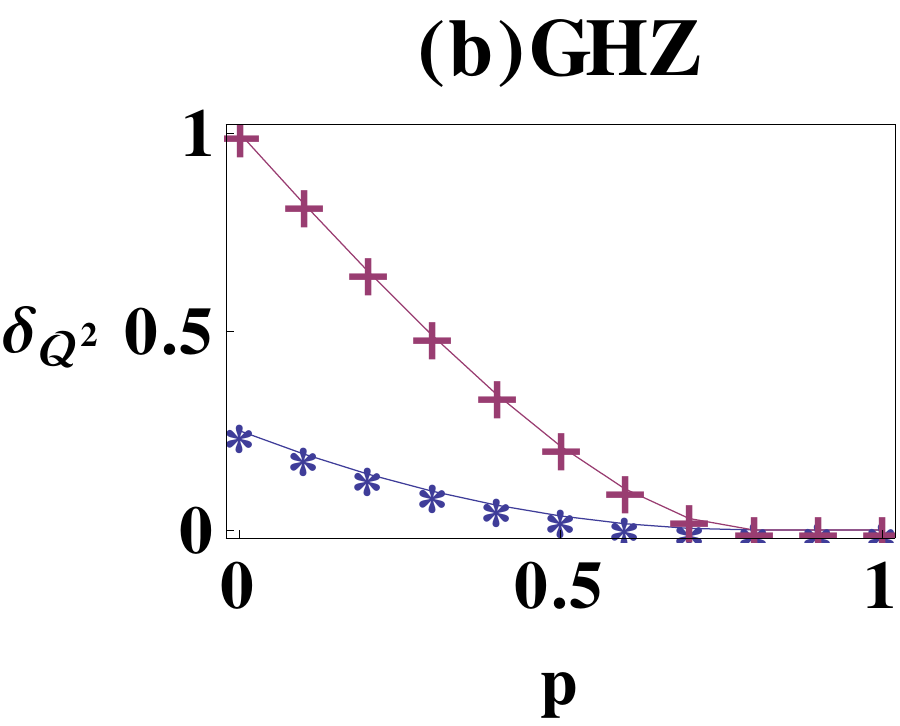}
\caption{(Color online.) Plots of monogamy scores, $\delta_{{\cal Q}^r}(\rho_{ABC})={\cal Q}^r(\rho_{A(BC)})-{\cal Q}^r(\rho_{AB})-{\cal Q}^r(\rho_{AC})$, of negativity (stars) and logarithmic-negativity (pluses) against noise parameter $p$ of GHZ state mixed with white noise, $\rho_{ABC}=(1-p)|GHZ\rangle \langle GHZ| + p \frac{\mathbb{I}}{8}$. For illustration, power of the quantum correlation measures we have considered is (a) $r=1$, and (b) $r=2$. Here we see that, for GHZ state, both negativity and logarithmic-negativity are monogamous for $r \geq 1$. While $x$-axis is dimensionless, $y$-axis is in ebits.}
\label{fig:ghz}
\end{figure}
\begin{figure}[htb]
\includegraphics[width=3.5cm]{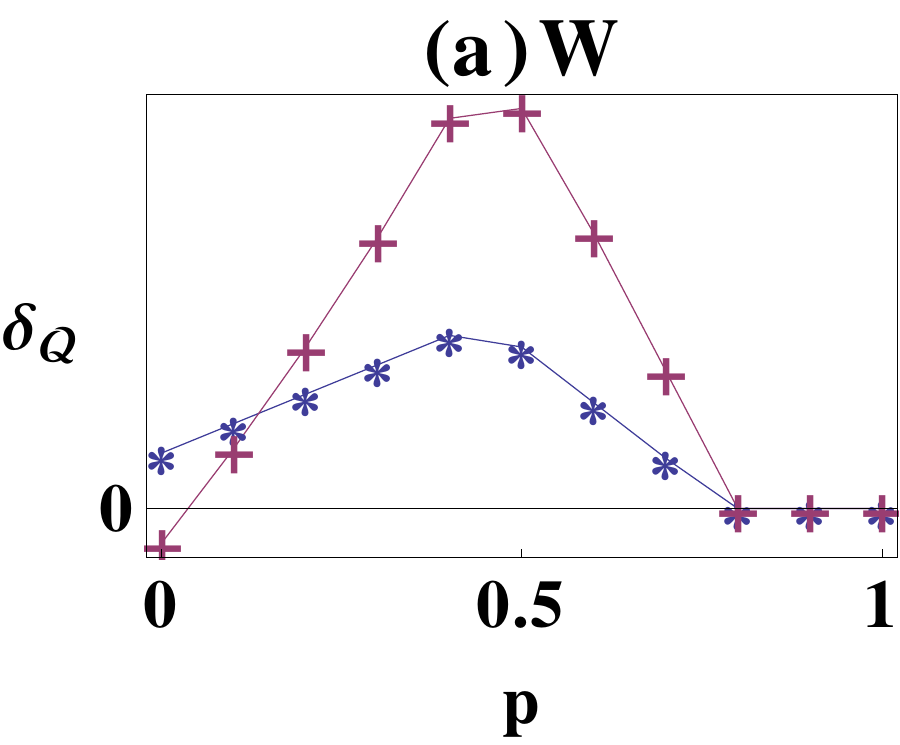} \hspace{0.5cm}
\includegraphics[width=3.5cm]{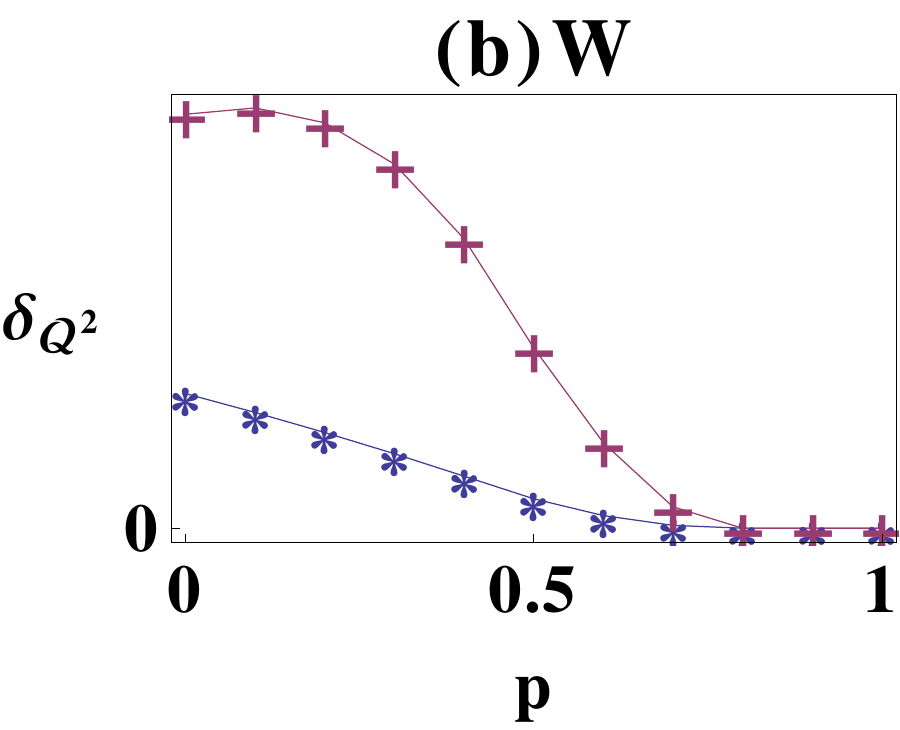}
\caption{(Color online.) Plots of monogamy scores, $\delta_{{\cal Q}^r}(\rho_{ABC})={\cal Q}^r(\rho_{A(BC)})-{\cal Q}^r(\rho_{AB})-{\cal Q}^r(\rho_{AC})$, of negativity (stars) and logarithmic-negativity (pluses) against noise parameter $p$ of W state mixed with white noise, $\rho_{ABC}=(1-p)|W\rangle \langle W| + p \frac{\mathbb{I}}{8}$. For demonstration, power of the quantum correlation measures that we have considered is (a) $r=1$, and (b) $r=2$. For W state, unlike GHZ state, while negativity is monogamous for $r \geq 1$, logarithmic-negativity is monogamous for $r \geq 2$. While $x$-axis is dimensionless, $y$-axis is in ebits.}
\label{fig:w}
\end{figure}

\begin{remark}
A particularly interesting scenario is the following. Suppose that a quantum correlation measure ${\cal Q}$ is monogamous for its $r^{th}$-power. It is important to know the least power, $r^*$, for which the monogamy relation of ${\cal Q}$ is preserved. That is, for what power does a monogamous measure become non-monogamous, and vice-versa? This situation is extremely demanding for generic quantum correlation measures and generic quantum states. 
Moreover, if quantum measure ${\cal Q}$ is monogamous for $r \geq 1$ power, then it will become non-monogamous for $\alpha \leq 0$. That is, if ${\cal Q}^r(\rho_{AB}) \geq \sum_{j=1}^n{\cal Q}^r(\rho_{AB_j})$, then ${\cal Q}^{\alpha}(\rho_{AB}) \leq \sum_{j=1}^n{\cal Q}^{\alpha}(\rho_{AB_j})$ for $\alpha \leq 0$. 
As specific examples, we give plots of monogamy scores, $\delta_{{\cal Q}^r}(\rho_{ABC})={\cal Q}^r(\rho_{A(BC)})-{\cal Q}^r(\rho_{AB})-{\cal Q}^r(\rho_{AC})$, of negativity \cite{negativity} and logarithmic-negativity \cite{logneg} against noise parameter $p$, of GHZ state, $|GHZ\rangle = \frac{1}{\sqrt{2}} (|000\rangle + |111\rangle)$, and W state, $|W\rangle = \frac{1}{\sqrt{3}} (|001\rangle + |010\rangle + |100\rangle)$, mixed with white noise in Figs. \ref{fig:ghz} and \ref{fig:w}. Power of negativity and logarithmic-negativity that we have considered for illustartion is (a) $r=1$, and (b) $r=2$.
We see that for GHZ state, both negativity and logarithmic negativity are monogamous for $r \geq 1$ (see Fig. \ref{fig:ghz}). On the other hand for W state, while negativity is monogamous for $r \geq 1$, logarithmic negativity is monogamous for $r \geq 2$ (see Fig. \ref{fig:w}).
From Fig. \ref{fig:w}(a), we see that logarithmic-negativity is non-monogamous for W state ($p=0$) when $r=1$. However, from Fig. \ref{fig:lneg-w}, we find that it remains non-monogamous upto $r^* \approx 1.06$ (upto second decimal point), and becomes monogamous when $r \gtrsim 1.06$.  
\end{remark}
\begin{figure}[htb]
\includegraphics[width=6cm]{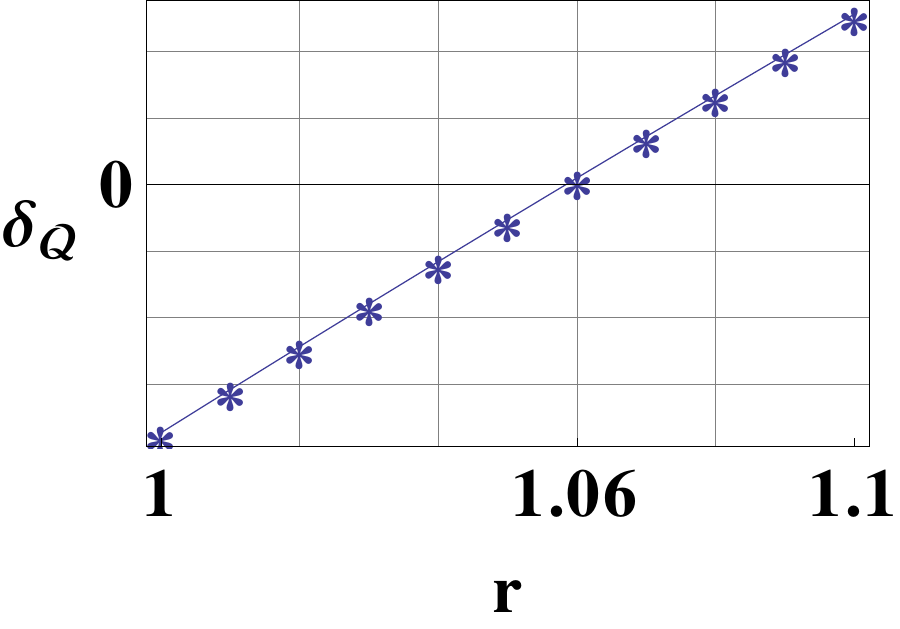}
\caption{(Color online.) 
Plot showing the transition from non-monogamy to monogamy of logarithmic-negativity for W state. Logarithmic-negativity is non-monogamous for W state at $r=1$. It remains non-monogamous upto $r \approx 1.06$ (upto second decimal point), beyond which it becomes monogamous. That is, the minimum power, $r^*$, for which logarithmic-negativity becomes monogamous for W state is $1.06$. While $x$-axis is dimensionless, $y$-axis is in ebits.}
\label{fig:lneg-w}
\end{figure}

Sometimes an entanglement measure ${\cal Q}$ can be a function of another entanglement measure $q$, say, ${\cal Q}=f(q^r)$. Depending on the nature of function $f$ and monogamy of $q$, the monogamy properties of ${\cal Q}$ can be derived. 
For instance, in the seminal paper of CKW \cite{monogamy}, it was already pointed out that any monotonic convex function
of squared concurrence would also be a monogamous measure of entanglement. We extend this observation for a general quantum correlation measure in the following theorem.
\begin{theorem}
\label{thm-functional1}
For an arbitrary multipartite quantum state $\rho_{AB}$, given that $q^r(\rho_{AB}) \geq \sum_j q^r(\rho_{AB_j})$ and ${\cal Q}=f(q^r)$, where $f$ is a monotonically increasing convex function for which $f^m \left(\sum_jq^r_j\right) \geq \sum_jf^m(q^r_j)$, we have ${\cal Q}^m(\rho_{AB}) \geq \sum_j {\cal Q}^m(\rho_{AB_j})$, where $r$ and $m$ are some positive numbers.
\end{theorem}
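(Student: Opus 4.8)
The plan is to reduce the statement to a short chain of monotonicity arguments, since the crucial super-additivity has already been supplied as a hypothesis. I would begin by fixing notation: write $x = q^r(\rho_{AB})$ and $x_j = q^r(\rho_{AB_j})$, so that the assumed monogamy of $q^r$ reads simply $x \geq \sum_j x_j$, and the target inequality ${\cal Q}^m(\rho_{AB}) \geq \sum_j {\cal Q}^m(\rho_{AB_j})$ becomes $f^m(x) \geq \sum_j f^m(x_j)$ after substituting ${\cal Q} = f(q^r)$.

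The first step is to apply the monotone increase of $f$ to the hypothesis $x \geq \sum_j x_j$, which yields $f(x) \geq f\!\left(\sum_j x_j\right)$. The second step is to raise both sides to the $m$-th power: because $f$ takes nonnegative values (it equals the correlation measure ${\cal Q}$, which is positive) and $m > 0$, the map $t \mapsto t^m$ is increasing on $[0,\infty)$, so the inequality is preserved, giving $f^m(x) \geq f^m\!\left(\sum_j x_j\right)$. The final step is to invoke the standing hypothesis $f^m\!\left(\sum_j x_j\right) \geq \sum_j f^m(x_j)$ and chain it with the previous inequality, obtaining $f^m(x) \geq \sum_j f^m(x_j)$, which upon reinstating ${\cal Q}^m = f^m(q^r)$ is exactly the claim.

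There is no genuine obstacle in the argument as stated, since the analytic content has been front-loaded into the hypothesis $f^m\!\left(\sum_j q^r_j\right) \geq \sum_j f^m(q^r_j)$. The only point deserving care is the second step, where monotonicity under the power map requires the nonnegativity of $f$ on the relevant range; this is guaranteed by the positivity of ${\cal Q}$. I would, however, emphasize that the convexity assumption on $f$ is what makes the super-additivity hypothesis \emph{natural} rather than ad hoc: a monotone increasing convex $f$ with $f(0)=0$ is itself super-additive, and this is precisely the mechanism underlying the CKW observation that any monotone convex function of the squared concurrence remains monogamous. Thus I would present convexity as the structural reason the hypothesis can be met, while the proof proper rests only on the three monotonicity steps above.
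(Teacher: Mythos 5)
Your reduction is valid only under a pointwise reading of the hypothesis ${\cal Q}=f(q^r)$, i.e., only if ${\cal Q}(\rho)=f(q^r(\rho))$ holds for every state involved, including the global \emph{mixed} state $\rho_{AB}$. That is not the setting the theorem is designed for, nor the one in which it is applied. In the paper, ${\cal Q}$ and $q$ are convex-roof measures: the identity ${\cal Q}=f(q^r)$ is guaranteed only on pure states (and, in the entanglement-of-formation application, on two-qubit reduced states via Wootters' formula), while on a mixed $\rho_{AB}$ both ${\cal Q}(\rho_{AB})$ and $q(\rho_{AB})$ are defined through their own optimal decompositions, which in general differ. Consequently your opening substitution --- ``the target inequality becomes $f^m(x)\geq\sum_j f^m(x_j)$'' --- is unjustified on the left-hand side, since ${\cal Q}(\rho_{AB})\neq f\left(q^r(\rho_{AB})\right)$ in general (e.g., entanglement of formation versus concurrence across a $2\otimes 2^n$ cut with $n\geq 2$). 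This is exactly the case needed for Corollary 1.1 on squared EoF, where the cut $A:(B_1\cdots B_n)$ is not $2\otimes 2$; under your reading the theorem would indeed be the near-triviality you describe, and it could not deliver that corollary.

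The missing piece is the inequality ${\cal Q}(\rho_{AB})\geq f\left(q^r(\rho_{AB})\right)$, and this is where the paper spends its effort and where convexity of $f$ is actually used --- not, as you suggest, merely as the ``structural reason'' the super-additivity hypothesis can be met. The paper takes the decomposition $\rho_{AB}=\sum_i p_i |\psi^i\rangle\langle\psi^i|$ that is optimal for ${\cal Q}$, writes ${\cal Q}(\rho_{AB})=\sum_i p_i f\left(q^r(|\psi^i\rangle)\right)$, applies Jensen's inequality (convexity of $f$), then the power-mean inequality $\sum_i p_i x_i^r \geq \left(\sum_i p_i x_i\right)^r$ (which incidentally requires $r\geq 1$, a condition hidden in the statement's ``positive numbers''), and finally the fact that this decomposition is suboptimal for $q$, so that $\sum_i p_i q(|\psi^i\rangle)\geq q(\rho_{AB})$; monotonicity of $f$ then yields ${\cal Q}(\rho_{AB})\geq f\left(q^r(\rho_{AB})\right)$. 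From that point on, the paper's chain --- monotonicity applied to the monogamy of $q^r$, followed by the hypothesis $f^m\left(\sum_j q^r_j\right)\geq\sum_j f^m(q^r_j)$ and the identity ${\cal Q}=f(q^r)$ on the reduced states --- coincides with your three steps. So your argument reproduces the tail of the proof but omits its head; to repair it you must either supply the convex-roof argument above or explicitly restrict the claim to states for which the functional relation ${\cal Q}=f(q^r)$ is itself known to hold.
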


\begin{proof}
Let $\rho_{AB}=\sum_i p_i |\psi^i\rangle_{AB}\langle \psi^i|$ be the optimal decomposition of $\rho_{AB}$ for ${\cal Q}$. Then
\begin{align}
{\cal Q}(\rho_{AB}) &=\sum_i p_i {\cal Q}(|\psi\rangle^i_{AB}) \nonumber \\
&=\sum_i p_i f(q^r_{ABi}) \nonumber \\
&\geq f\left(\sum_i p_i q^r_{ABi}\right) \nonumber \\
&\geq f\left(\left(\sum_i p_i q_{ABi}\right)^r\right) \nonumber \\
&\geq f\left(q^r(\rho_{AB})\right) \nonumber \\
&\geq f\left(\sum_j q^r_{AB_j}\right) \nonumber \\
&\geq \left(\sum_j f^m(q^r_{AB_j})\right)^{\frac{1}{m}} \nonumber \\
&= \left(\sum_j {\cal Q}^m_{AB_j}\right)^{\frac{1}{m}},
\end{align}
where the first inequality is due to convexity of $f$, the second is due to monotonically increasing nature of $f$ and $\sum_i p_i x_i^r \geq \left(\sum_i p_i x_i\right)^r$, the third is due to monotonicity of $f$ and because $\rho_{AB}=\sum_i p_i |\psi^i\rangle_{AB}\langle \psi^i|$ may not be the optimal decomposition of $\rho_{AB}$ for $q$ (that is, $q(\rho_{AB}) \leq \sum_i p_i q(|\psi\rangle^i_{AB})$), the fourth is due to monogamy of $q^r$, and the fifth inequality follows from the constraint $f^m(\sum_jq^r_j) \geq \sum_jf^m(q^r_j)$.
Hence the theorem is proved.
\end{proof} 

The monogamy of squared EoF can be stated as a corollary of Theorem \ref{thm-functional1}.
\begin{corollary}
\label{cor-eof2}
The square of entanglement of formation is monogamous. 
\end{corollary}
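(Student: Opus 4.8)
The plan is to realise the entanglement of formation \cite{eof} as a function of the concurrence and then match the corollary to the template of Theorem \ref{thm-functional1}. For a two-qubit state the entanglement of formation and the concurrence $C$ are tied by the Wootters relation $E(\rho_{AB}) = f\big(C^{2}(\rho_{AB})\big)$, where
\begin{equation}
f(x) = h\!\left(\tfrac{1+\sqrt{1-x}}{2}\right), \qquad h(t) = -t\log_2 t - (1-t)\log_2(1-t)
\label{eq:eof-conc}
\end{equation}
is the binary entropy of the relevant eigenvalue. I would therefore set $q = C$, $r = 2$ so that $q^{r} = C^{2} \equiv \tau$ is the tangle, $\mathcal{Q} = E = f(q^{r})$, and take the exponent $m = 2$; with these identifications the corollary is exactly the statement $\mathcal{Q}^{2}(\rho_{AB}) \geq \sum_j \mathcal{Q}^{2}(\rho_{AB_j})$ delivered by Theorem \ref{thm-functional1} (in the multiqubit setting, where the concurrence machinery applies).

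The second step is to check the hypotheses. The monogamy of $q^{r} = \tau$, i.e. $\tau_{A(B_1\cdots B_n)} \geq \sum_j \tau_{AB_j}$, is the Coffman--Kundu--Wootters inequality and its $n$-qubit extension \cite{monogamy, osborne}, so the monogamy premise of Theorem \ref{thm-functional1} holds. Moreover, because the Wootters decomposition that is optimal for $E$ consists of pure states all sharing the concurrence $C(\rho_{AB})$, one has $E(\rho_{AB}) = f(\tau(\rho_{AB}))$ \emph{exactly}; the decomposition and Jensen steps internal to the proof of Theorem \ref{thm-functional1} thus collapse to equalities for EoF, and the only substantive remaining requirement is the condition $f^{2}\big(\sum_j \tau_j\big) \geq \sum_j f^{2}(\tau_j)$. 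Since $f(0) = h(1) = 0$, this superadditivity is implied by convexity of $g := f^{2}$: a convex $g$ vanishing at the origin obeys $g(a) \leq \tfrac{a}{a+b}g(a+b)$ and $g(b) \leq \tfrac{b}{a+b}g(a+b)$, hence $g(a)+g(b) \leq g(a+b)$, and the $n$-term version follows by iteration on the simplex $\{\tau_j \geq 0 : \sum_j \tau_j \leq 1\}$ where the tangles live.

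The crux, and the main obstacle, is therefore the convexity of $g = f^{2} = \big[h(\tfrac{1+\sqrt{1-x}}{2})\big]^{2}$ on $[0,1]$. This is delicate: the function $f$ itself is monotonically increasing but \emph{concave} in $\tau$ (already $f(\tfrac{1}{2}) \approx 0.60 > \tfrac{1}{2}$ lies above the chord), so $f$ is subadditive and the \emph{first} power of EoF is in general non-monogamous. Convexity is restored only after squaring, and establishing $g''(x) \geq 0$ requires differentiating twice through the square, the binary entropy, and the nested radical. I would verify this sign either by the explicit second-derivative estimate worked out for squared EoF in Ref. \cite{mono-eof2-bai}, or by showing that $g'$ is monotone with the aid of numerics. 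With the convexity of $f^{2}$ secured, the superadditivity condition holds, Theorem \ref{thm-functional1} applies with $m = 2$, and we conclude $E^{2}(\rho_{AB}) \geq \sum_j E^{2}(\rho_{AB_j})$, establishing the monogamy of the squared entanglement of formation.
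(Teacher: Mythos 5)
Your proposal follows essentially the same route as the paper's proof: both realize EoF via the Wootters relation ${\cal E}=f(\mathcal{C}^2)$, invoke the CKW--Osborne--Verstraete monogamy of the squared concurrence \cite{monogamy, osborne}, and apply Theorem \ref{thm-functional1} with $m=2$ using the monotonicity, convexity, and superadditivity of squared EoF as a function of the tangle. If anything, your write-up is more careful than the paper's one-line assertion of these properties, since you derive the superadditivity condition from convexity together with $f^2(0)=0$ and correctly defer the convexity of $f^2$ to the estimate in Ref. \cite{mono-eof2-bai}.
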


\begin{proof}
EoF is a concave function of squared concurrence given by
\begin{equation}
\label{eq:eof}
{\cal E}(\rho_{AB})=h\left(\frac{1+\sqrt{1-\mathcal{C}^2(\rho_{AB})}}{2}\right),
\end{equation}
where $h(x)=-x\log_2x-(1-x)\log_2(1-x)$ is the Shannon (binary) entropy. 
However, squared EoF is a convex monotonic function of squared concurrence and satisfies 
${\cal E}^2(\sum_j \rho_{AB_j}) \geq \sum_j {\cal E}^2(\rho_{AB_j})$. The corollary follows because squared concurrence is monogamous \cite{monogamy, osborne}.
\end{proof}
Independent proofs of monogamy of squared EoF have been provided recently in Refs. \cite{mono-eof2-bai, mono-eof2-fanchini}.\\

\begin{remark}
\label{thm-functional2}
Using the same line of proof as in Theorem \ref{thm-functional1}, we can show that 
for an arbitrary multipartite quantum state $\rho_{AB}$, given that $q^r(\rho_{AB}) \geq \sum_j q^r(\rho_{AB_j})$ and ${\cal Q}=f(q^r)$, where $f$ be a monotonically decreasing concave function for which $f^m \left(\sum_jq^r_j\right) \geq \sum_jf^m(q^r_j)$, we have ${\cal Q}^m(\rho_{AB}) \leq \sum_j {\cal Q}^m(\rho_{AB_j})$, where $r$ and $m$ are some positive numbers.\\
\end{remark}

Next we asked whether there is any correspondence between monogamy of a quantum correlation measure for arbitrary pure and that of mixed states. This led us to the result in Theorem \ref{thm-mixed}, and the remarks and corollary following it.
\begin{theorem}
\label{thm-mixed}
If a convex bipartite quantum correlation measure ${\cal Q}$ when raised to power \(r=1,2\) is monogamous for pure multipartite states, then ${\cal Q}^r$ is also monogamous for the mixed states in the given Hilbert space.
\end{theorem}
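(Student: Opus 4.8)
The plan is to reduce monogamy of $\mathcal{Q}^r$ on a mixed state to the assumed pure-state monogamy by exploiting the convex-roof structure of $\mathcal{Q}$, and then to push the resulting inequality down to the two-party reduced states using convexity. Throughout I write $\rho^i_{AB_j}$ for the reduced state of the pure component $|\psi^i\rangle_{AB}$ on parties $A$ and $B_j$, so that $\rho_{AB_j}=\sum_i p_i\rho^i_{AB_j}$ whenever $\rho_{AB}=\sum_i p_i|\psi^i\rangle_{AB}\langle\psi^i|$. First I would fix the optimal (convex-roof) decomposition realizing $\mathcal{Q}(\rho_{AB})=\sum_i p_i\,\mathcal{Q}(|\psi^i\rangle_{AB})$, exactly as in the proof of Theorem \ref{thm-functional1}.

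For $r=1$ the argument is short: apply the assumed pure-state monogamy $\mathcal{Q}(|\psi^i\rangle_{AB})\geq\sum_j\mathcal{Q}(\rho^i_{AB_j})$ termwise, interchange the two finite sums, and then invoke convexity of $\mathcal{Q}$ in the form $\mathcal{Q}(\rho_{AB_j})={\cal Q}(\sum_i p_i\rho^i_{AB_j})\leq\sum_i p_i\,\mathcal{Q}(\rho^i_{AB_j})$ to conclude $\mathcal{Q}(\rho_{AB})\geq\sum_j\mathcal{Q}(\rho_{AB_j})$.

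The substantive case is $r=2$, and here the plan is to treat the tuple $\vec v_i=(\mathcal{Q}(\rho^i_{AB_1}),\dots,\mathcal{Q}(\rho^i_{AB_n}))$ as a vector and work with its Euclidean norm. Pure-state monogamy of $\mathcal{Q}^2$ gives $\mathcal{Q}(|\psi^i\rangle_{AB})\geq\|\vec v_i\|_2$, hence $\mathcal{Q}(\rho_{AB})\geq\sum_i p_i\|\vec v_i\|_2$. I would then chain three facts: (i) the triangle inequality (convexity of a norm), $\sum_i p_i\|\vec v_i\|_2\geq\|\sum_i p_i\vec v_i\|_2$; (ii) convexity of $\mathcal{Q}$, which makes the $j$-th entry of $\sum_i p_i\vec v_i$ satisfy $\sum_i p_i\mathcal{Q}(\rho^i_{AB_j})\geq\mathcal{Q}(\rho_{AB_j})\geq 0$; and (iii) monotonicity of the $\ell_2$ norm in nonnegative coordinates, upgrading this entrywise bound to $\|\sum_i p_i\vec v_i\|_2\geq(\sum_j\mathcal{Q}^2(\rho_{AB_j}))^{1/2}$. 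Combining these yields $\mathcal{Q}(\rho_{AB})\geq(\sum_j\mathcal{Q}^2(\rho_{AB_j}))^{1/2}$, i.e. $\mathcal{Q}^2(\rho_{AB})\geq\sum_j\mathcal{Q}^2(\rho_{AB_j})$.

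I expect the main obstacle to be precisely the $r=2$ bookkeeping: one cannot simply square the $r=1$ argument, because squaring a sum is not the sum of squares, so the convexity of $\mathcal{Q}$ alone does not close the estimate. The key insight is to read $(\sum_j\mathcal{Q}^2)^{1/2}$ as a Euclidean norm and use that a norm is itself convex (Minkowski inequality), which is what propagates the single-state convexity of $\mathcal{Q}$ through the square root. A minor point to check is the monotonicity step (iii), which requires nonnegative entries; this holds by positivity of $\mathcal{Q}$. Only these two ingredients --- the convex-roof identity and the convexity/monotonicity of the $\ell_2$ norm --- are needed beyond the assumed pure-state monogamy, and no special feature of the particular measure is used, which is why the corollaries (such as squared negativity for mixed states) follow at once.
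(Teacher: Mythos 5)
Your proof is correct and follows essentially the same route as the paper's: fix the optimal convex-roof decomposition of $\rho_{AB}$ for ${\cal Q}$, apply the assumed pure-state monogamy term by term, aggregate, and finish with convexity of ${\cal Q}$ on the reduced states $\rho_{AB_j}=\sum_i p_i\rho^i_{AB_j}$. The only difference is one of packaging: where you invoke the Minkowski (triangle) inequality for the $\ell_2$ norm plus its monotonicity on nonnegative vectors, the paper inlines the identical argument by expanding the square of the sum and applying the Cauchy--Schwarz inequality to the cross terms, which is precisely the textbook proof of the $\ell_2$ triangle inequality.
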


\begin{proof}
Convexity of ${\cal Q}$ implies that if $\rho=\sum_i p_i \rho^i$ then ${\cal Q}(\rho) \leq \sum_i p_i {\cal Q}(\rho^i)$.
Assume that ${\cal Q}^r$, (\(r=1,2\)), is monogamous for arbitrary multipartite pure state $|\psi\rangle_{AB}=|\psi\rangle_{AB_1B_2 \cdots B_n}$ in some Hilbert space of dimension $d_A \otimes d_{B_1} \otimes d_{B_2} \cdots \otimes d_{B_n}$. That is,
\begin{equation}
{\cal Q}^r(|\psi\rangle_{AB}) \geq \sum_{j=1}^n{\cal Q}^r(\rho^{\psi}_{AB_j}).
\end{equation}
Let $\rho_{AB}=\sum_i p_i |\psi^i\rangle_{AB}\langle \psi^i|=\sum_i p_i \rho^i_{AB}$ be the optimal decomposition of $\rho_{AB}$ for ${\cal Q}$, and  $\rho^i_{AB_j}=\mbox{tr}_{\overline{AB_j}} \rho^i_{AB}$,  $\rho_{AB_j}=\mbox{tr}_{\overline{AB_j}} \rho_{AB}$ be the reduced density matrices obtained after partial-tracing the sub-systems except $A \ \ and \ \ B_j$ ($j=1,2,\cdots,n$).\\

When \(r=1\), we have 
${\cal Q}(\rho_{AB})=\sum_i p_i{\cal Q}(|\psi^i\rangle_{AB})
 \geq \sum_i p_i \sum_j {\cal Q}(\rho^i_{AB_j})
 = \sum_j \left(\sum_i p_i {\cal Q}(\rho^i_{AB_j})\right) 
 \geq \sum_j {\cal Q}\left(\sum_i p_i \rho^i_{AB_j}\right)
 = \sum_j {\cal Q}(\rho_{AB_j})$,
where the first inequality is due to monogamy of ${\cal Q}$ for pure states and the second inequality is due to the convexity of ${\cal Q}$.\\

When \(r=2\), let us write
\begin{align}
{\cal Q}(\rho_{AB})&=\sum_i p_i{\cal Q}(|\psi^i\rangle_{AB})=\sum_i {\cal Q}_{ABi} \\
{\cal Q'}(\rho_{AB_j})&=\sum_i p_i{\cal Q}(\rho^i_{AB_j})=\sum_i {\cal Q}_{AB_ji} \nonumber \\
& \geq {\cal Q}(\rho_{AB_j})
\label{convexity-Q}
\end{align}
The above inequality follows from convexity of ${\cal Q}$. We, then, have the following inequality
\begin{eqnarray}
&&{\cal Q}^2(\rho_{AB})-\sum_j {\cal Q'}^2(\rho_{AB_j}) \nonumber \\
&=&\left(\sum_i {\cal Q}_{ABi}\right)^2-\sum_j \left(\sum_i {\cal Q}_{AB_ji}\right)^2 \nonumber \\
&=&\sum_i \left({\cal Q}^2_{ABi}-\sum_j {\cal Q}^2_{AB_ji}\right) \nonumber \\
&&+2\sum_{i=1}^{n-1} \sum_{k=i+1}^n \left({\cal Q}_{ABi}{\cal Q}_{ABk}-\sum_j{\cal Q}_{AB_ji}{\cal Q}_{AB_jk}\right) \nonumber \\
& \geq 0,
\end{eqnarray}
because, in the second equation, the first term is non-negative due to monogamy of ${\cal Q}^2$ for pure states and the second term is non-negative as shown below. We have, for arbitrary pure states 
$|\psi^i\rangle_{AB}$ and $|\psi^k\rangle_{AB}$,
\begin{align}
{\cal Q}^2_{ABi}{\cal Q}^2_{ABk}& \geq \left(\sum_j {\cal Q}^2_{AB_ji}\right) \left(\sum_j {\cal Q}^2_{AB_jk}\right) \nonumber \\
& \geq \left(\sum_j {\cal Q}_{AB_ji}{\cal Q}_{AB_jk}\right)^2,
\end{align} 
where the first inequality is due to monogamy of ${\cal Q}^2$ for pure states while the second inequality follows from the Cauchy-Schwarz inequality, $\sum_i a_ib_i \leq \sqrt{\left(\sum_i a^2_i\right) \left(\sum_i b^2_i\right)}$. Hence,
\begin{equation}
{\cal Q}_{ABi}{\cal Q}_{ABk}-\sum_j{\cal Q}_{AB_ji}{\cal Q}_{AB_jk} \geq 0.
\end{equation}
Since ${\cal Q'}(\rho_{AB_j}) \geq {\cal Q}(\rho_{AB_j})$ (due to convexity of ${\cal Q}$ as shown in Eq. (\ref{convexity-Q})), we obtain the desired monogamy relation for mixed state,
\begin{equation}
{\cal Q}^2(\rho_{AB}) \geq \sum_j {\cal Q}^2(\rho_{AB_j}).
\end{equation}
\end{proof}

\begin{remark}
For \(r > 2\), the monogamy of ${\cal Q}^r$ is as yet inclusive for mixed states, even though monogamy holds for pure states (see Appendix). 
\end{remark}

\begin{remark}
In Ref. \cite{asu-multi}, it was shown numerically that entanglement measures become monogamous for pure states with increasing number of qubits. 
It was also figured out that ``good'' entanglement measures \cite{good-measures} like relative entropy of entanglement, regularized
relative entropy of entanglement \cite{reg-rel-ent}, entanglement cost \cite{ent-cost1, ent-cost2}, distillable entanglement, all of which are not generally computable, are monogamous for almost all pure states of four or more qubits.
Theorem \ref{thm-mixed} then implies that such ``good'' convex entanglement measures will become monogamous for multiqubit mixed states also.
\end{remark}

\begin{corollary}
\label{cor-neg2}
The squared negativity is monogamous for \(n\)-qubit mixed states.
\end{corollary}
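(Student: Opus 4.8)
The plan is to obtain this as a direct application of Theorem \ref{thm-mixed} with the correlation measure ${\cal Q}$ taken to be the negativity and the power $r=2$. Theorem \ref{thm-mixed} reduces the mixed-state monogamy question to two ingredients: (i) the convexity of the negativity as a bipartite quantum correlation measure, and (ii) the monogamy of its square for arbitrary multiqubit \emph{pure} states. Once both are in place, the theorem supplies the mixed-state relation ${\cal Q}^2(\rho_{AB}) \geq \sum_j {\cal Q}^2(\rho_{AB_j})$ automatically, so the entire argument is just a matter of checking that negativity meets the hypotheses.

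First I would verify convexity. Recalling that for a bipartition $A:B$ the negativity is $N(\rho_{AB}) = (\|\rho_{AB}^{T_A}\|_1 - 1)/2$, where $T_A$ denotes partial transposition on $A$ and $\|\cdot\|_1$ is the trace norm, convexity is immediate: partial transposition is a linear map and the trace norm is a convex function, so the composition $\rho \mapsto \|\rho^{T_A}\|_1$ is convex and hence so is $N$. This places negativity squarely within the hypotheses of Theorem \ref{thm-mixed}.

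Next I would invoke the monogamy of squared negativity for multiqubit pure states, which is the genuinely nontrivial input and the step I expect to be the main obstacle if one insisted on proving it from scratch. Fortunately this is already established: for an arbitrary $n$-qubit pure state one has $N^2(|\psi\rangle_{A(B_1\cdots B_n)}) \geq \sum_j N^2(\rho^{\psi}_{AB_j})$ \cite{mono-neg2-fan, mono-neg2-vidal, mono-neg2-kim}. I would simply cite this result, noting that it holds precisely in the $d_A \otimes d_{B_1} \otimes \cdots \otimes d_{B_n}$ setting with each factor a qubit, which matches the Hilbert-space restriction appearing in the theorem.

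With convexity and pure-state monogamy of $N^2$ both secured, Theorem \ref{thm-mixed} applied at $r=2$ immediately yields $N^2(\rho_{AB}) \geq \sum_j N^2(\rho_{AB_j})$ for every $n$-qubit mixed state in the same Hilbert space, completing the proof. The only subtlety worth flagging is that the theorem's argument for $r=2$ relies on the optimal convex decomposition of $\rho_{AB}$ with respect to ${\cal Q}=N$; since the convex-roof structure of negativity is well defined, this causes no difficulty and the reduction goes through exactly as in the $r=2$ case of Theorem \ref{thm-mixed}.
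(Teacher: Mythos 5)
Your proposal follows exactly the paper's own route: the paper's proof likewise consists of noting that negativity is convex (citing \cite{negativity}) and that its square is monogamous for $n$-qubit pure states (citing \cite{mono-neg2-fan}), and then applying Theorem \ref{thm-mixed} with $r=2$. Your additional verification of convexity via linearity of partial transposition and convexity of the trace norm is a correct elaboration of what the paper leaves to a citation, so the two arguments coincide in substance.
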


\begin{proof}
Negativity is a convex function \cite{negativity}, and it has been  proven that the square of negativity is monogamous for \(n\)-qubit pure states \cite{mono-neg2-fan}. Hence the proof.
\end{proof}

Further, we wanted to explore if we could obtain general and tighter monogamy relations other than the standard one in Eq. (\ref{eq:std-mono-ineq}). This led us to the results in Theorem \ref{thm-hierarchy1} and Theorem \ref{thm-tight1}, and the remarks following the same.
\begin{theorem}
\label{thm-hierarchy1}
{\bf (Hierarchical monogamy relations)}
${\cal Q}^r(\rho_{AXY}) \geq {\cal Q}^r(\rho_{AX})+{\cal Q}^r(\rho_{AY})$ for an arbitrary state $\rho_{AXY}$ implies ${\cal Q}^{\alpha}(\rho_{ABC}) \geq {\cal Q}^{\alpha}(\rho_{AB})+\sum_{j=1}^{k-1} {\cal Q}^{\alpha}(\rho_{AC_j})+ {\cal Q}^{\alpha}(\rho_{AC_k \cdots C_m})$ for $\rho_{ABC} \equiv \rho_{ABC_1\cdots C_m}$ when $\alpha \geq r$.
\end{theorem}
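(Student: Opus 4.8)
The plan is to prove the statement in two stages: first, derive the hierarchical inequality at the original exponent $r$ by iterating the two-party hypothesis; second, lift the exponent from $r$ to any $\alpha \geq r$ by invoking Theorem \ref{thm-raising}.

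For the first stage I would run a ``peeling'' argument, exploiting that the hypothesis ${\cal Q}^r(\rho_{AXY}) \geq {\cal Q}^r(\rho_{AX}) + {\cal Q}^r(\rho_{AY})$ holds for \emph{arbitrary} groupings $X,Y$ and hence for every reduced state met along the way. First take $X=B$, $Y=C_1\cdots C_m$ to split off $B$:
\[
{\cal Q}^r(\rho_{ABC}) \geq {\cal Q}^r(\rho_{AB}) + {\cal Q}^r(\rho_{AC_1\cdots C_m}).
\]
Then apply the hypothesis to the residual state $\rho_{AC_1\cdots C_m}$ with $X=C_1$, $Y=C_2\cdots C_m$ to peel off $C_1$, and iterate, peeling off $C_2,\dots,C_{k-1}$ one at a time. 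Substituting each new bound into the previous one collapses the chain to
\[
{\cal Q}^r(\rho_{ABC}) \geq {\cal Q}^r(\rho_{AB}) + \sum_{j=1}^{k-1}{\cal Q}^r(\rho_{AC_j}) + {\cal Q}^r(\rho_{AC_k\cdots C_m}).
\]

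For the second stage, I would note that the right-hand side is precisely the standard monogamy sum for the multipartite partition $A : B : C_1 : \cdots : C_{k-1} : (C_k\cdots C_m)$, with the lumped block $C_k\cdots C_m$ regarded as a single party. The displayed bound is therefore exactly the premise of Theorem \ref{thm-raising} at exponent $r$ for this partition, so (for $r \geq 1$) that theorem immediately promotes it to any $\alpha \geq r$, giving the claimed
\[
{\cal Q}^{\alpha}(\rho_{ABC}) \geq {\cal Q}^{\alpha}(\rho_{AB}) + \sum_{j=1}^{k-1}{\cal Q}^{\alpha}(\rho_{AC_j}) + {\cal Q}^{\alpha}(\rho_{AC_k\cdots C_m}).
\]
Equivalently, one could raise the exponent first --- applying Theorem \ref{thm-raising} to the tripartite hypothesis to obtain its $\alpha$-power form for all $\rho_{AXY}$ --- and then peel at exponent $\alpha$; the two routes agree.

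The peeling is routine; the step needing care is confirming that Theorem \ref{thm-raising} legitimately applies to the \emph{grouped} partition, i.e.\ that treating $C_k\cdots C_m$ as one indivisible subsystem is admissible and that the two-party marginals $\rho_{AC_j}$ produced by the peeling coincide exactly with those in the target, so nothing is double-counted when the chain collapses. I expect the only genuine hypothesis to pin down is $r \geq 1$, which is implicit in the statement and is what makes the final power-raising valid.
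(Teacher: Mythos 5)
Your proposal is correct and follows essentially the same two-stage strategy as the paper: first derive the hierarchical inequality at exponent $r$ by repeatedly applying the bipartite-split hypothesis (peeling off $B$, then $C_1,\dots,C_{k-1}$), then lift the result to $\alpha \geq r$ via Theorem \ref{thm-raising}. The only cosmetic difference is that the paper carries out the power-raising step by expanding the residual block ${\cal Q}^{\alpha}(\rho_{AC})$ term by term with the same power inequalities, whereas you invoke Theorem \ref{thm-raising} once on the coarse-grained partition $A:B:C_1:\cdots:C_{k-1}:(C_k\cdots C_m)$ --- a legitimate and slightly tidier packaging of the same idea, with the same implicit requirement $r \geq 1$ that the paper also uses.
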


\begin{proof}
First, we will prove the hierarchical monogamy relations using the given condition, ${\cal Q}^r(\rho_{AXY}) \geq {\cal Q}^r(\rho_{AX})+{\cal Q}^r(\rho_{AY})$ for arbitrary state $\rho_{AXY}$, and thereafter we will show that these relations are also valid for $\alpha \geq r$. For multiparty state $\rho_{ABC} \equiv \rho_{ABC_1\cdots C_m}$, applying the given condition repeatedly, we obtain a family of hierarchical monogamy relations as given below
\begin{align}
\label{mono-ABC0}
{\cal Q}^r(\rho_{ABC}) &\geq {\cal Q}^r(\rho_{AB})+{\cal Q}^r(\rho_{AC}) \nonumber \\
& \geq {\cal Q}^r(\rho_{AB})+{\cal Q}^r(\rho_{AC_1})+{\cal Q}^r(\rho_{AC_2 \cdots C_m}) \nonumber \\
& \hspace{1.5cm} \vdots \nonumber \\
& \geq {\cal Q}^{r}(\rho_{AB})+\sum_{j=1}^{k-1} {\cal Q}^{r}(\rho_{AC_j})+ {\cal Q}^{r}(\rho_{AC_k \cdots C_m}) \nonumber \\
& \hspace{1.5cm} \vdots \nonumber \\
&\geq {\cal Q}^{r}(\rho_{AB})+\sum_{j=1}^{m} {\cal Q}^{r}(\rho_{AC_j}).
\end{align}
Now, we will show that these hierarchical monogamy relations are also valid for $\alpha \geq r$.
From Theorem \ref{thm-raising}, ${\cal Q}^r(\rho_{ABC}) \geq {\cal Q}^r(\rho_{AB})+{\cal Q}^r(\rho_{AC})$ implies that
\begin{equation}
{\cal Q}^{\alpha}(\rho_{ABC}) \geq {\cal Q}^{\alpha}(\rho_{AB})+{\cal Q}^{\alpha}(\rho_{AC}).
\label{mono-ABC}
\end{equation}
Now,
\begin{align}
{\cal Q}^{\alpha}(\rho_{AC})&= \left\{{\cal Q}^{r}(\rho_{AC_1(C_2 \cdots C_m)})\right\}^{\frac{\alpha}{r}} \nonumber \\
&\geq \left\{{\cal Q}^{r}(\rho_{AC_1})+{\cal Q}^{r}(\rho_{A(C_2 \cdots C_m)})\right\}^{\frac{\alpha}{r}} \nonumber \\ 
&\geq \sum_{j=1}^{k-1} {\cal Q}^{\alpha}(\rho_{AC_j})+{\cal Q}^{\alpha}(\rho_{AC_k \cdots C_m}) \\
&\geq \sum_{j=1}^{m} {\cal Q}^{\alpha}(\rho_{AC_j}).
\end{align}
Thus, we obtain inequalities
\begin{equation}
{\cal Q}^{\alpha}(\rho_{ABC}) \geq {\cal Q}^{\alpha}(\rho_{AB})+\sum_{j=1}^{k-1} {\cal Q}^{\alpha}(\rho_{AC_j})+{\cal Q}^{\alpha}(\rho_{AC_k \cdots C_m}), 
\label{result1}
\end{equation}
and
\begin{equation}
{\cal Q}^{\alpha}(\rho_{ABC}) \geq {\cal Q}^{\alpha}(\rho_{AB})+\sum_{k=1}^m {\cal Q}^{\alpha}(\rho_{AC_k}).
\label{result2}
\end{equation}
\end{proof}

\begin{remark}
\label{thm-hierarchy2}
Using Theorem \ref{thm-lowering} and the same line of proof as in Theorem \ref{thm-hierarchy1}, one can prove {\em hierarchical non-monogamy relations}.
${\cal Q}^r(\rho_{XYZ}) \leq {\cal Q}^r(\rho_{XY})+{\cal Q}^r(\rho_{XZ})$ for an arbitrary state $\rho_{XYZ}$ implies ${\cal Q}^{\alpha}(\rho_{ABC}) \leq {\cal Q}^{\alpha}(\rho_{AB})+\sum_{j=1}^{k-1} {\cal Q}^{\alpha}(\rho_{AC_j})+ {\cal Q}^{\alpha}(\rho_{AC_k \cdots C_m})$ for $\rho_{ABC} \equiv \rho_{ABC_1\cdots C_m}$ when $\alpha \leq r$.
\end{remark}

\begin{theorem}
\label{thm-tight1}
{\bf (Strong monogamy inequality)}
If ${\cal Q}^r(\rho_{AB}) \geq \sum_j {\cal Q}^r(\rho_{AB_j})$ for an arbitrary multipartite quantum state $\rho_{AB_1\cdots B_n} \equiv \rho_{AB}$, then ${\cal Q}^{\alpha}(\rho_{AB}) \geq \frac{1}{2^{n-1}-1}\sum_X {\cal Q}^{\alpha}(\rho_{AX}) \geq \sum_j {\cal Q}^{\alpha}(\rho_{AB_j})$ for $\alpha \geq r \geq 1$, where \(X\) is the composite system corresponding to some nonempty proper subset of $B = \{B_1,B_2,\cdots,B_n \}$.
\end{theorem}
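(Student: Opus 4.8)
The plan is to derive both inequalities from the monogamy of ${\cal Q}^{\alpha}$, which the hypothesis yields through Theorem~\ref{thm-raising}. Since $\alpha \geq r \geq 1$, Theorem~\ref{thm-raising} upgrades the assumed monogamy of ${\cal Q}^r$ to monogamy of ${\cal Q}^{\alpha}$; and because the hypothesis is posited for an arbitrary state and decomposition, ${\cal Q}^{\alpha}$ is monogamous for every coarse-graining of the $B$-subsystems into blocks. I would use this at two granularities: the finest one (the singletons $B_j$) for the right-hand inequality, and the coarsest nontrivial one (a two-block split $\{X, X^c\}$) for the left-hand inequality.

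For the right-hand inequality, I would apply monogamy of ${\cal Q}^{\alpha}$ to each reduced state $\rho_{AX}$, giving ${\cal Q}^{\alpha}(\rho_{AX}) \geq \sum_{B_j \in X} {\cal Q}^{\alpha}(\rho_{AB_j})$ for every nonempty proper subset $X \subseteq B$. Summing over all such $X$ and reorganizing the double sum, each term ${\cal Q}^{\alpha}(\rho_{AB_j})$ is counted once for every $X$ containing $B_j$. A short count shows there are exactly $2^{n-1}-1$ such subsets (the subsets of $B\setminus\{B_j\}$ with $B_j$ adjoined, excluding the full set $B$). Hence $\sum_X {\cal Q}^{\alpha}(\rho_{AX}) \geq (2^{n-1}-1)\sum_j {\cal Q}^{\alpha}(\rho_{AB_j})$, which is the right-hand inequality after dividing by $2^{n-1}-1$.

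For the left-hand inequality, I would exploit complementation: pair each nonempty proper subset $X$ with its complement $X^c = B\setminus X$, which is again a nonempty proper subset and always satisfies $X \neq X^c$. This partitions the $2^n-2$ nonempty proper subsets into exactly $2^{n-1}-1$ disjoint pairs. For each pair, the two-block monogamy of ${\cal Q}^{\alpha}$ applied to the split $\rho_{A(X X^c)} = \rho_{AB}$ gives ${\cal Q}^{\alpha}(\rho_{AB}) \geq {\cal Q}^{\alpha}(\rho_{AX}) + {\cal Q}^{\alpha}(\rho_{AX^c})$. Summing this over all $2^{n-1}-1$ pairs yields $(2^{n-1}-1){\cal Q}^{\alpha}(\rho_{AB}) \geq \sum_X {\cal Q}^{\alpha}(\rho_{AX})$, which is precisely the left-hand inequality.

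The main obstacle I anticipate is not the arithmetic but the justification that monogamy may be invoked at the coarse two-block level $\{X, X^c\}$ rather than only for singletons. I would address this by reading the hypothesis as holding for arbitrary states and arbitrary groupings, so that the two-block relation already holds at power $r$ and is lifted to power $\alpha$ by Theorem~\ref{thm-raising}, paralleling the convention used in Theorem~\ref{thm-hierarchy1}. The only remaining point requiring care is the combinatorial bookkeeping, namely verifying that the complement pairing is fixed-point-free and that the two counts of $2^{n-1}-1$ (pairs on one side, containments on the other) match the stated normalization, both of which are elementary.
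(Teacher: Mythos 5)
Your proposal is correct and follows essentially the same route as the paper: both derive the two-block relation ${\cal Q}^{\alpha}(\rho_{AB}) \geq {\cal Q}^{\alpha}(\rho_{AX}) + {\cal Q}^{\alpha}(\rho_{AX^c})$ from the hypothesis via Theorem~\ref{thm-raising}, obtain the left-hand inequality by summing over complementary subsets, and obtain the right-hand inequality by applying monogamy of ${\cal Q}^{\alpha}$ to the reduced states $\rho_{AX}$ and averaging over all nonempty proper subsets. Your bookkeeping (fixed-point-free pairing of $X$ with $X^c$, and counting the $2^{n-1}-1$ subsets containing a given $B_j$) differs only cosmetically from the paper's double-counting argument and is equally valid.
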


\begin{proof}
Here again we can split the proof in two parts, as in the proof of Theorem \ref{thm-hierarchy1}. 
For instance, first we can obtain the monogamy relation, ${\cal Q}(\rho_{AB}) \geq \frac{1}{2^{n-1}-1}\sum_{X} {\cal Q}\left(\rho_{AX}\right) \geq \sum_{j} {\cal Q}\left(\rho_{AB_j}\right)$, and then show that such a monogamy relation is also true for the $\alpha$-th power. However, for the sake of brevity, we will start with the $\alpha$-th power.
Let $B = \{B_1,B_2,\cdots,B_n \}$ be the set of
subsystems $B_i$'s, and $X=\{B_{i_1},\cdots,B_{i_k}\}$ and $X^c=B-X$ be nonempty proper subsets of $B$. Thus $\rho_{AB}=\rho_{AXX^c}$. Applying monogamy inequality and Theorem \ref{thm-raising}, we get
\begin{align}
{\cal Q}^{\alpha}\left(\rho_{AB}\right) \geq {\cal Q}^{\alpha}\left(\rho_{AX}\right)+{\cal Q}^{\alpha}\left(\rho_{AX^c}\right).
\label{monoAXXC}
\end{align}
Since the set of all nonempty proper subsets of $B$ is same as the set of their complements, i.e.,
$\{X | X \subset B \}=\{X^c| X \subset B \}$, summing over all possible nonempty proper subsets $X$'s of $B$ leads to the following inequality,
\begin{align}
{\cal Q}^{\alpha}\left(\rho_{AB}\right) &\geq \frac{1}{2^n-2}\sum_{X}\left({\cal Q}^{\alpha}\left(\rho_{AX}\right)+{\cal Q}^{\alpha}\left(\rho_{AX^c}\right)\right) \nonumber\\
&= \frac{1}{2^{n-1}-1}\sum_X {\cal Q}^{\alpha}(\rho_{AX}).
\label{mono-sets}
\end{align}
We also have 
\begin{align}
{\cal Q}^{\alpha}\left(\rho_{AX}\right)+{\cal Q}^{\alpha}\left(\rho_{AX^c}\right) \geq \sum_j {\cal Q}^{\alpha}\left(\rho_{AB_j}\right).
\label{upper1}
\end{align}
Again summing over all possible nonempty proper subsets $X$'s of $B$, we obtain
\begin{align}
\frac{1}{2^{n-1}-1}\sum_{X} {\cal Q}^{\alpha}\left(\rho_{AX}\right) \geq \sum_{j} {\cal Q}^{\alpha}\left(\rho_{AB_j}\right).
\label{upper2}
\end{align}
Combining inequalities~(\ref{mono-sets}) and (\ref{upper2}), we obtain the desired {\em strong monogamy inequality} for arbitrary multi-party quantum state $\rho_{AB_1B_2\cdots B_n}$.
\end{proof}


\begin{remark}
\label{thm-tight2}
It was shown in Ref. \cite{strong-poly-kim} that entanglement of assistance \cite{eoa} follows strong non-monogamy relation.
Using Theorem \ref{thm-lowering} and the same line of proof as in Theorem \ref{thm-tight1}, we can prove that if ${\cal Q}^r(\rho_{AB}) \leq \sum_j {\cal Q}^r(\rho_{AB_j})$ for any multipartite quantum state $\rho_{AB_1\cdots B_n} \equiv \rho_{AB}$, then ${\cal Q}^{\alpha}(\rho_{AB}) \leq \frac{1}{2^{n-1}-1}\sum_X {\cal Q}^{\alpha}(\rho_{AX}) \leq \sum_j {\cal Q}^{\alpha}(\rho_{AB_j})$ for $\alpha \leq r$.
\end{remark}

\section{conclusion}
\label{sec:summary}
Monogamy is one of the most important properties for many-body quantum systems, which restricts sharing of quantum correlations among many parties and there is a trade-off among the amounts of quantum correlations in different subsystems and partitions. It is also a distinguishing feature between quantum and classical correlations. Moreover, it has played a significant role in devising quantum security in secret key generation and multiparty communication protocols, besides being a useful tool in exploring nonclassical correlations in multiparty systems.
In this letter, we have explored the conditions under which monogamy of functions of quantum correlation measures is preserved. 
We have shown that a monogamous measure remains monogamous on raising its power, and a non-monogamous measure remains non-monogamous on lowering its power. We have also proven that monogamy of a convex quantum correlation measure for arbitrary multipartite pure quantum states leads to its monogamy for the mixed states. 
This significantly simplifies the task of establishing the monogamy relations for mixed states.
Our study partially answers the two conjectures in Ref. \cite{hier-eof2} in the sense that it now only remains to prove the monogamy of the squared entanglement of formation for pure states in arbitrary dimensions.
Monogamy of squared negativity for mixed states and that of squared entanglement of formation turn out to be special cases of our results.
Furthermore, we have examined hierarchical monogamy relations and tighter monogamy inequalities compared to the standard one. 

\begin{acknowledgements}
AK is very thankful to Ujjwal Sen and Mallesham K. for useful discussions and reading the manuscript. 
\end{acknowledgements}

\begin{center}
\textbf{Appendix}\\~\\
\end{center}

\textit{Theorem \ref{thm-mixed} cannot be stated conclusively for \(r > 2\) by using the same line of proof as in the main text.}\\

Multinomial expansion is given by
\begin{equation}
\left(\sum_i x_i\right)^r=\sum_{\{r_k\}} \frac{r!}{\prod_k r_k!} \prod_i x_i^{r_k}
\end{equation}
where $\{r_k | 0 \leq r_k \leq r \ \ \& \sum_k r_k=r \}$ is the integer partition of \(r\), and the summation is over all permutations of such integer partitions of \(r\). Then, as in Theorem \ref{thm-mixed}, we have 
\begin{eqnarray}
&&{\cal Q}^r(\rho_{AB})-\sum_j {\cal Q'}^r(\rho_{AB_j}) \nonumber \\
&=&\left(\sum_i {\cal Q}_{ABi}\right)^r-\sum_j \left(\sum_i {\cal Q}_{AB_ji}\right)^r \nonumber \\
&=&\sum_{\{r_k\}} \frac{r!}{\prod_k r_k!} \left(\prod_i {\cal Q}_{ABi}^{r_k}-\sum_j \prod_i {\cal Q}_{AB_ji}^{r_k}\right) \nonumber \\
&=&\sum_i\left({\cal Q}^r_{ABi}-\sum_j {\cal Q}^r_{AB_ji}\right) \nonumber \\
&&+\sum_{\{r_k \neq r\}} \frac{r!}{\prod_k r_k!} \left(\prod_i {\cal Q}_{ABi}^{r_k}-\sum_j \prod_i {\cal Q}_{AB_ji}^{r_k}\right).
\end{eqnarray}
Although, in the third equality, the first term is non-negative due to the monogamy of ${\cal Q}^r$ for pure states, we cannot say anything with certainty about the second term as we do not have {\em Holder-type inequality for multi-variables}. However, the other way is always true, i.e., if ${\cal Q}^r$ is monogamous for mixed states then it is certainly monogamous for pure states.
\hfill $\blacksquare$
\end{document}